\begin{document}

\newtheorem{theorem}{Theorem}[section]
\newtheorem{corollary}[theorem]{Corollary}
\newtheorem{definition}[theorem]{Definition}
\newtheorem{proposition}[theorem]{Proposition}
\newtheorem{conjecture}[theorem]{Conjecture}
\newtheorem{lemma}[theorem]{Lemma}
\newtheorem{example}[theorem]{Example}
\newenvironment{proof}{\noindent {\bf Proof.}}{\rule{3mm}{3mm}\par\medskip}
\newcommand{\remark}{\medskip\par\noindent {\bf Remark.~~}}
\title{Gold type codes of higher relative dimension}
\author{Chunlei Liu\footnote{Dept. of math., Shanghai Jiao Tong Univ., Sahnghai 200240, China, clliu@sjtu.edu.cn.}\ \footnote{Dengbi Technologies Cooperation Limited., Yichun 336099, China, 714232747@qq.com.}}
\date{}
\maketitle
\thispagestyle{empty}

\abstract{
Let $m,d,e,k$ be fixed positive integers such that
\[e=(m,d)=(m,2d), ~2\leq k \leq \frac{m+e}{2e}.\]
Let $s$ be a fixed maximum-length binary sequence of length $2^{m}-1$. Let
$(s_1,s_2,\cdots,s_{k-1})$ be a system of circular decimations of $s$ whose decimation factors are respectively
\[2^{d}+1,2^{2d}+1,\cdots,2^{(k-1)d}+1,\]
or respectively
\[2^{d}+1,2^{3d}+1,\cdots, 2^{(2k-3)d}+1,\]
or respectively
\[2^{(\frac{m-e}{2e})d}+1,2^{(\frac{m-3e}{2e})d}+1,\cdots,2^{(\frac{m+3e}{2e}-k)d}+1.\]
Then $s_1,\cdots,s_{k-1}$ are maximum-length binary sequences of length $2^{m}-1$. Let $C$ be the ${\mathbb F}_2$-vector space generated by all circular shifts of $s,s_1,\cdots,s_{k-1}$. Then $C$ has an ${\mathbb F}_{2^m}$-vector space structure, and is of dimension $k$ over ${\mathbb F}_{2^m}$. When $k=2$, $C$ is the Gold code. So we regard $C$ as a Gold type code of relative dimension $k$. The DC component distribution of $C$ is explicitly calculated out in the present paper.

\noindent {\bf Key phrases}: Gold code, cyclic code, alternating form

\noindent {\bf MSC:} 94B15, 11T71.

\section{\small{INTRODUCTION}}
\paragraph{}
Let $q$ be a prime power, and $C$ an $[n,k]$-linear code over ${\mathbb F}_q$. The weight of a codeword $c=(c_0,c_1,\cdots,c_{n-1})$ of $C$ is defined to be
\[{\rm wt}(c)=\#\{0\leq i\leq n-1|~c_i\neq0\}.\]
For each $i=0,1,\cdots,n$, define
\[A_i=\#\{c\in C\mid~{\rm wt}(c)=i\}.\]The sequence $(A_0,A_1,\cdots,A_n)$ is called the weight distribution of $C$.
Given a linear code $C$, it is challenging to determine its weight distribution.
The
weight distribution of Gold codes was determined by Gold \cite{Gold66, Gold67, Gold68}.
The
weight distribution of Kasami codes was determined by Kasami \cite{Kasami66}.
The
weight enumerators of Gold type and Kasami type codes of higher relative dimension were determined by Berlekamp \cite{Ber} and Kasami \cite{Kasami71}.
The
weight distribution of the $p$-ary analogue of Gold codes was determined by Trachtenberg \cite{Tr}.
The
weight distribution of the circular decimation of the $p$-ary analogue of Gold codes with decimation factor $2$ was determined by Feng-Luo \cite{FL}.
The
weight distribution of the $p$-ary analogue of Gold type codes of relative dimension $3$ was determined by Zhou-Ding-Luo-Zhang \cite{ZDLZ}.
The
weight distribution of the circular decimation with decimation factor $2$ of the $p$-ary analogue of Gold type codes  of relative dimension $3$ was determined by Zheng-Wang-Hu-Zeng \cite{ZWHZ}.
The
weight distribution of (the $p$-ary analogue of) Kasami type codes of maximum relative dimension was determined by Li-Hu-Feng-Ge \cite{LHFG}.
The
weight distribution of the $p$-ary analogue of Gold type codes of higher relative dimension was determined by Schmidt \cite{Sch}. The weight distribution of some other classes of cyclic codes was determined in the papers \cite{AL}, \cite{BEW}, \cite{BMC}, \cite{BMC10}, \cite{BMY}, \cite{De}, \cite{DLMZ}, \cite{DY}, \cite{FE}, \cite{FM}, \cite{KL}, \cite{LF}, \cite{LHFG}, \cite{LN}, \cite{LYL}, \cite{LTW}, \cite{MCE}, \cite{MCG}, \cite{MO}, \cite{MR}, \cite{MY}, \cite{MZLF}, \cite{RP}, \cite{SC}, \cite{VE}, \cite{WTQYX}, \cite{XI}, \cite{XI12}, \cite{YCD}, \cite{YXDL} and \cite{ZHJYC}.
\paragraph{}
Let $m,d,e,k$ be fixed positive integers such that
\[e=(m,d)=(m,2d), ~2\leq k \leq \frac{m+e}{2e}.\]
Let $s$ be a fixed maximum-length binary sequence of length $2^{m}-1$. Let
$(s_1,s_2,\cdots,s_{k-1})$ be a system of circular decimations of $s$ whose decimation factors are respectively
\[2^{d}+1,2^{2d}+1,\cdots,2^{(k-1)d}+1,\]
or respectively
\[2^{d}+1,2^{3d}+1,\cdots, 2^{(2k-3)d}+1,\]
or respectively
\[2^{(\frac{m-e}{2e})d}+1,2^{(\frac{m-3e}{2e})d}+1,\cdots,2^{(\frac{m+3e}{2e}-k)d}+1.\]
Then $s_1,\cdots,s_{k-1}$ are maximum-length binary sequences of length $2^{m}-1$. Let $C$ be the ${\mathbb F}_2$-vector space generated by all circular shifts of $s,s_1,\cdots,s_{k-1}$. If $d=e=1$, then $C$ is the code studied by Berlekamp \cite{Ber} and Kasami \cite{Kasami71}.
Let $\{Q_{\vec a}\}$ be the system
\[
Q_{\vec {a}}(x)={\rm  Tr}_{\mathbb{F}_{2^{m}}/\mathbb{F}_{2^e}}(a_0x)+\sum_{j=1}^{k-1}{\rm  Tr}_{\mathbb{F}_{2^{m}}/\mathbb{F}_{2^e}}(a_jx^{2^{jd}+1}),~\vec {a}\in\mathbb{F}_{2^{m}}^{k},
\]
or the system
\[
Q_{\vec {a}}(x)={\rm  Tr}_{\mathbb{F}_{2^{m}}/\mathbb{F}_{2^e}}(a_0x)+\sum_{j=1}^{k-1}{\rm  Tr}_{\mathbb{F}_{2^{m}}/\mathbb{F}_{2^e}}(a_jx^{2^{(2j-1)d}+1}),~\vec {a}\in\mathbb{F}_{2^{m}}^{k},
\]
or the system
\[
Q_{\vec {a}}(x)={\rm  Tr}_{\mathbb{F}_{2^{m}}/\mathbb{F}_{2^e}}(a_0x)+\sum_{j=1}^{k-1}{\rm  Tr}_{\mathbb{F}_{2^{m}}/\mathbb{F}_{2^e}}(a_jx^{2^{(\frac{m+e}{2e}-j)d}+1}),~\vec {a}\in\mathbb{F}_{2^{m}}^{k}.
\]
Then \[C=\{c_{\vec{a}}\mid~\vec{a}\in{\mathbb F}_{2^m}^k\},\]
where $c_{\vec a}=({\rm Tr}_{{\mathbb F}_{2^e}/{\mathbb F}_2}(Q_{\vec a} (\pi^{-i}))_{i=0}^{2^m-2}$ with $\pi$ being a primitive element of ${\mathbb F}_{2^m}$. The correspondence $\vec{a}\mapsto c_{\vec a}$ defines an ${\mathbb F}_{2^m}$-vector space structure on $C$, and $C$ is of dimension $k$ over ${\mathbb F}_{2^m}$. When $k=2$, $C$ is the Gold code. So we call $C$ a Gold type code of relative dimension $k$.
\paragraph{}
One can prove the following.
\begin{theorem}\label{dcbound}
If $c\in C$ is nonzero, then \[{\rm DC}(c)\in\{-1,-1+\pm2^{\frac{m+e}{2}+je}\mid~j=0,1,2,\cdots,k-2\},\]
where
\[{\rm DC}(c)=2^m-1-2{\rm wt}(c)=\sum_{i=0}^{2^m-2}(-1)^{c_i}\]
is the DC component of $c=(c_0,c_1,\cdots,c_{2^m-2})\in C$.
\end{theorem}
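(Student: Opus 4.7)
The strategy is to write the DC component of $c_{\vec a}$ as an exponential sum of the quadratic form $Q_{\vec a}$ on $\mathbb{F}_{2^m}$ viewed as an $\mathbb{F}_{2^e}$-vector space, and then to apply the Dickson classification of such forms together with careful control of the rank of the polar bilinear form. Since $i\mapsto \pi^{-i}$ enumerates $\mathbb{F}_{2^m}^{*}$ and $Q_{\vec a}(0)=0$,
\[
{\rm DC}(c_{\vec a})\;=\;-1+S(\vec a),\qquad S(\vec a):=\sum_{x\in\mathbb{F}_{2^m}}(-1)^{{\rm Tr}_{\mathbb{F}_{2^e}/\mathbb{F}_2}(Q_{\vec a}(x))}.
\]
The hypothesis $(m,d)=(m,2d)$ is equivalent to $f:=m/e$ being odd, so $\mathbb{F}_{2^m}$ is an $\mathbb{F}_{2^e}$-vector space of odd dimension $f$. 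Applying the Dickson classification to the quadratic form $Q_{\vec a}\colon\mathbb{F}_{2^m}\to\mathbb{F}_{2^e}$, one obtains $S(\vec a)\in\{0,\ \pm 2^{(m+\ell e)/2}\}$, where $\ell:=\dim_{\mathbb{F}_{2^e}}W$ and $W$ is the radical of the polar bilinear form $B_{\vec a}$.

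A direct computation, using the identity ${\rm Tr}(a_j x y^{2^{jd}})={\rm Tr}(a_j^{2^{-jd}} x^{2^{-jd}} y)$, gives $B_{\vec a}(x,y)={\rm Tr}_{\mathbb{F}_{2^m}/\mathbb{F}_{2^e}}(y\,\tilde L_{\vec a}(x))$, where $\tilde L_{\vec a}$ is an $\mathbb{F}_{2^e}$-linearised polynomial whose explicit shape depends on the chosen system of decimations---for the first system, $\tilde L_{\vec a}(x)=\sum_{j=1}^{k-1}(a_j x^{2^{jd}}+a_j^{2^{-jd}} x^{2^{-jd}})$. Non-degeneracy of the trace pairing identifies $W$ with $\ker\tilde L_{\vec a}$. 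In characteristic $2$ one has $B_{\vec a}(x,x)=2\sum_j {\rm Tr}(a_j x^{2^{jd}+1})=0$, so $B_{\vec a}$ is alternating, its rank $f-\ell$ over $\mathbb{F}_{2^e}$ is even, and combined with $f$ odd this forces $\ell$ to be odd.

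It remains to establish the upper bound $\ell\leq 2k-3$; together with the parity constraint this produces $\ell\in\{1,3,\ldots,2k-3\}$, hence $r:=\ell e\in\{e,3e,\ldots,(2k-3)e\}$, and finally ${\rm DC}(c_{\vec a})\in\{-1,\ -1\pm 2^{(m+e)/2+je}\mid j=0,\ldots,k-2\}$. This upper bound is the main obstacle. After applying the Frobenius $x\mapsto x^{2^{(k-1)d}}$, $\tilde L_{\vec a}$ becomes a $2^d$-linearised polynomial whose nonzero $2^d$-exponents lie in $\{0,1,\ldots,2k-2\}\setminus\{k-1\}$ for the first system (the other two systems yield analogous polynomials with bounded-support exponent sets). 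A naive polynomial-degree bound only gives $\ell\leq 2(k-1)\,d/e$, which is tight enough only when $d=e$. To sharpen the estimate in general one exploits the self-adjoint structure $\tilde L_{\vec a}=M+M^{*}$ (with $M(x)=\sum_{j=1}^{k-1} a_j x^{2^{jd}}$ and $M^{*}$ the adjoint of $M$ with respect to the trace pairing) together with the Galois action of ${\rm Gal}(\mathbb{F}_{2^m}/\mathbb{F}_{2^e})$ on the set of roots, in order to descend the bound from $\bar{\mathbb{F}}_2$ to $\mathbb{F}_{2^m}$. The two alternate systems of decimation factors are handled by entirely analogous computations with exponents $2^{(2j-1)d}+1$ and $2^{((m+e)/(2e)-j)d}+1$ in place of $2^{jd}+1$.
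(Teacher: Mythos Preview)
Your overall architecture matches the paper's exactly: express ${\rm DC}(c_{\vec a})+1$ as the character sum $S(\vec a)$, invoke the standard evaluation of quadratic character sums over $\mathbb{F}_{2^e}$-vector spaces, observe that $f=m/e$ is odd (this is indeed equivalent to $(m,d)=(m,2d)$) so that the rank of the alternating polar form $B_{\vec a}$ is even and hence $\ell=\dim_{\mathbb{F}_{2^e}}\mathrm{Rad}(B_{\vec a})$ is odd, and then reduce everything to the bound $\ell\le 2(k-1)$ when $(a_1,\dots,a_{k-1})\neq 0$.

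The gap is precisely at the step you flag as ``the main obstacle.'' You correctly note that the raw $\mathbb{F}_{2^e}$-linearised degree of $\tilde L_{\vec a}$ only gives $\ell\le 2(k-1)d/e$, and you then propose to repair this via ``the self-adjoint structure $\tilde L_{\vec a}=M+M^{*}$ together with the Galois action \dots\ to descend the bound from $\bar{\mathbb{F}}_2$ to $\mathbb{F}_{2^m}$.'' This is not an argument; nothing in the self-adjointness alone cuts the exponent count from $2(k-1)d/e$ down to $2(k-1)$, and you do not say what Galois-descent statement you intend to invoke or why it yields the required bound. As written, the proof is incomplete at exactly the nontrivial point.

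The paper's device is short and different from what you sketch. One passes to the overfield $\mathbb{F}_{2^{md/e}}$, in which the (Frobenius-twisted) polynomial $\sum_{j=1}^{k-1}\bigl(a_j^{2^{(k-1-j)d}}x^{2^{(k-1-j)d}}+a_j^{2^{(k-1)d}}x^{2^{(k-1+j)d}}\bigr)$ is genuinely $\mathbb{F}_{2^d}$-linearised of $2^d$-degree at most $2(k-1)$; hence its zero set in $\mathbb{F}_{2^{md/e}}$ is an $\mathbb{F}_{2^d}$-subspace of dimension at most $2(k-1)$. The key observation is that, because $\gcd(m,d)=e$, any $\mathbb{F}_{2^e}$-basis of $\mathbb{F}_{2^m}$ is automatically $\mathbb{F}_{2^d}$-linearly independent inside $\mathbb{F}_{2^{md/e}}$. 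Consequently $\dim_{\mathbb{F}_{2^e}}\bigl(\ker\tilde L_{\vec a}\cap\mathbb{F}_{2^m}\bigr)\le \dim_{\mathbb{F}_{2^d}}\bigl(\ker\tilde L_{\vec a}\cap\mathbb{F}_{2^{md/e}}\bigr)\le 2(k-1)$. That single linear-algebra fact replaces your unspecified descent argument; once you insert it, your proposal becomes the paper's proof verbatim.
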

The present paper is concerned with the frequencies
\begin{equation}\label{dcfrequencydef}\alpha_{r,\varepsilon}=\#\{0\neq c\in C\mid~{\rm DC}(c)=-1+\varepsilon 2^{m-\frac{er}{2}}\},~r=0,2,4,\cdots,\frac{m-e}{e}.\end{equation}
The main result of the present paper is the following.
\begin{theorem}\label{main}
For each $j=0,1,\cdots,k-2$, and for each $\varepsilon=\pm1$, we have
\[\alpha_{\frac{m-e}{e}-2i,\varepsilon}=\frac12(2^{m-e-2ei}+\varepsilon2^{\frac{m-e}{2}-ei})\sum_{j=i}^{k-2}(-1)^{j-i}4^{e\binom{j-i}{2}}
\binom{j}{i}_{4^{e}}\binom{\frac{m-e}{2e}}{j}_{4^e}(2^{m(k-1-j)}-1),\]
where $\binom{j}{i}_{q}$ is a Gaussian binomial coefficient.
\end{theorem}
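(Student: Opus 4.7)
The approach is the standard Gauss-sum technology for Gold/Kasami codes, adapted to higher relative dimension. First, using the definition of $c_{\vec a}$ and transitivity of the trace,
\[
{\rm DC}(c_{\vec a}) = -1 + \sum_{x \in \mathbb{F}_{2^m}} (-1)^{{\rm Tr}_{\mathbb{F}_{2^m}/\mathbb{F}_2}(a_0 x + P_{\vec a'}(x))},
\]
where $P_{\vec a'}$ is the purely quadratic part of $Q_{\vec a}$ and $\vec a' = (a_1,\ldots,a_{k-1})$. The polarization of $x \mapsto {\rm Tr}_{\mathbb{F}_{2^m}/\mathbb{F}_2}(P_{\vec a'}(x))$ takes the shape ${\rm Tr}_{\mathbb{F}_{2^m}/\mathbb{F}_2}(y\, L_{\vec a'}(x))$ for a symmetrized linearized polynomial $L_{\vec a'}$ built from $P_{\vec a'}$. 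Because $e = (m,d) = (m,2d)$ forces $e \mid d$, every Frobenius twist $\sigma^{jd}$ (with $\sigma:x \mapsto x^2$) fixes $\mathbb{F}_{2^e}$ pointwise; hence $L_{\vec a'}$ is an $\mathbb{F}_{2^e}$-linear endomorphism of $\mathbb{F}_{2^m}$ and the polarization is an alternating $\mathbb{F}_{2^e}$-bilinear form. Since $m/e$ is odd (also forced by the hypothesis), $\dim_{\mathbb{F}_{2^e}} \ker L_{\vec a'}$ is an odd integer $2i+1$, and the $\mathbb{F}_{2^e}$-rank of the polarization equals $(m-e)/e - 2i$.

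Standard quadratic Gauss-sum evaluation on $\mathbb{F}_{2^m}$ then gives that the inner character sum equals $0$ or $\pm 2^{(m+e(2i+1))/2}$; for fixed $\vec a' \neq 0$, the $a_0$'s producing a nonzero sum form a coset union of codimension $e(2i+1)$ in $\mathbb{F}_{2^m}$, split evenly between the two signs by the Arf invariant of the restricted form. Combining with Theorem \ref{dcbound} and the reparametrization $r = (m-e)/e - 2i$ yields
\[
\alpha_{\frac{m-e}{e}-2i,\varepsilon} = \tfrac{1}{2}\bigl(2^{m-e-2ei} + \varepsilon 2^{(m-e)/2-ei}\bigr) \cdot N_i,
\]
where $N_i := \#\{\vec a' \in \mathbb{F}_{2^m}^{k-1} : \dim_{\mathbb{F}_{2^e}} \ker L_{\vec a'} = 2i+1\}$.

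It remains to prove that $N_i$ equals the alternating sum stated in the theorem. I would obtain this by Möbius inversion over the poset of $\mathbb{F}_{2^e}$-subspaces $W \subseteq \mathbb{F}_{2^m}$ that can occur as radicals. For each such $W$, set $M_W := \#\{\vec a' : W \subseteq \ker L_{\vec a'}\}$; because the defining condition is $\mathbb{F}_2$-linear in $\vec a'$, $M_W$ depends only on the combinatorial type of $W$. The key combinatorial claim is that when the quotient of $W$ by a canonical anisotropic line (automatically contained in every such kernel) has $\mathbb{F}_{4^e}$-dimension $j$ -- equivalently $\dim_{\mathbb{F}_{2^e}} W = 2j+1$ -- one has $M_W = 2^{m(k-1-j)}$, and the number of such $W$'s is $\binom{(m-e)/(2e)}{j}_{4^e}$. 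The $\mathbb{F}_{4^e}$-structure on the quotient comes from the symmetry $\sigma^{jd} \leftrightarrow \sigma^{-jd}$ visible in $L_{\vec a'}$, which makes the nondegenerate part of the alternating $\mathbb{F}_{2^e}$-form descend to a Hermitian-type form over $\mathbb{F}_{4^e}$. Applying the classical Möbius kernel $(-1)^{j-i}q^{\binom{j-i}{2}}\binom{j}{i}_q$ with $q = 4^e$ to the $M_W$'s converts the ``$\supseteq$'' count into the ``$=$'' count, producing the displayed formula for $N_i$.

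\textbf{Main obstacle.} The substantive step is the combinatorial claim in the previous paragraph: identifying the $\mathbb{F}_{4^e}$-structure responsible for the base $4^e$ in the Gaussian binomial, verifying the rank formula $M_W = 2^{m(k-1-j)}$, and doing so \emph{uniformly} across the three exponent systems $\{2^{jd}+1\}$, $\{2^{(2j-1)d}+1\}$, and $\{2^{((m+e)/(2e)-j)d}+1\}$. Each system yields a different $L_{\vec a'}$, and showing that all three produce the same count (after appropriate reindexing) is the conceptual content of Theorem \ref{main}; the Gauss-sum evaluation and Möbius inversion are then routine bookkeeping.
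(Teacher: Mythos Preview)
Your reduction of $\alpha_{r,\varepsilon}$ to the rank--frequency $N_i$ (which is the paper's $\beta_r$) is essentially what the paper does in its Theorem~3.2, and your Gauss--sum reasoning there is sound, although the phrase ``split evenly'' is misleading: the split between signs is $\tfrac12(2^{er}\pm 2^{er/2})$, not $50$--$50$.

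The genuine gap is in your computation of $N_i$. Your ``key combinatorial claim''---that for every admissible $(2j{+}1)$-dimensional $\mathbb{F}_{2^e}$-subspace $W$ one has $M_W=2^{m(k-1-j)}$, that such $W$ are counted by $\binom{(m-e)/(2e)}{j}_{4^e}$, and that all radicals contain a fixed ``canonical anisotropic line''---is false already for $k=2$. Take the first exponent system and $W=\mathbb{F}_{2^e}\!\cdot w$ with $w\neq 0$. The condition $W\subseteq\ker L_{a_1}$ becomes $a_1w+a_1^{2^d}w^{2^{2d}}=0$, which after the substitution $c=a_1w^{2^d+1}$ reads $c=c^{2^d}$; hence $c\in\mathbb{F}_{2^d}\cap\mathbb{F}_{2^m}=\mathbb{F}_{2^e}$ and $M_W=2^e$, not the $2^m$ your formula predicts. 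Moreover, as $a_1$ ranges over $\mathbb{F}_{2^m}^{\times}$ the radical $\mathrm{Rad}(B_{a_1})$ sweeps out \emph{every} one-dimensional $\mathbb{F}_{2^e}$-line, so no single line is common to all radicals. The ``$\mathbb{F}_{4^e}$-structure on the quotient'' you invoke does not exist as a vector-space structure on the radicals; the base $4^e$ in the Gaussian binomials arises for a different reason.

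What the paper actually does for $N_i$ is orthogonal to subspace M\"obius inversion. It observes that the space $Y=\{B_{\vec a'}:\vec a'\in\mathbb{F}_{2^m}^{k-1}\}$ of alternating $\mathbb{F}_{2^e}$-bilinear forms on the odd-dimensional space $\mathbb{F}_{2^m}$ has $|Y|=2^{m(k-1)}$ and (by Theorem~2.2) minimum rank $\geq (m-e)/e-2(k-2)$; this saturates the Delsarte--Goethals Singleton-type bound, and Delsarte--Goethals then forces the entire rank distribution, yielding the stated formula with $4^e$-binomials. The paper also gives a self-contained alternative: compute the moments $\sum_{\vec a'}(\sum_{x,y}(-1)^{\mathrm{Tr}\,B_{\vec a'}(x,y)})^u$ for $0\le u\le k-1$ by reducing to counting solutions of the bilinear systems $V_{u,u}$ (Theorem~5.1 shows $V_{s,u}=V_{u,u}$ for $s\ge u$, handled uniformly for the three exponent systems by an elimination trick), and then invert the resulting Vandermonde-type system via the $q$-binomial M\"obius formula. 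Either route avoids analysing individual $M_W$'s.
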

From the above theorem one can deduce the following.
\begin{theorem}\label{balanced}We have
\[\begin{split}&\#\{c\in C\mid
~{\rm DC}(c)=-1\}\\&=2^{mk}-1-\sum_{u=0}^{k-2}(-1)^u2^{-e(u+1)^2}(2^{m(k-u)}-2^m)\prod_{j=0}^{u-1}(2^m-2^{e(2j+1)})
\\&\approx2^{mk}(1-\sum_{u=0}^{k-2}(-1)^u2^{-e(u+1)^2}).\end{split}\]
\end{theorem}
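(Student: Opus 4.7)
The plan is to derive Theorem \ref{balanced} directly from Theorem \ref{main} by a simple partitioning argument. By Theorem \ref{dcbound}, every nonzero codeword has DC component either equal to $-1$ or of the form $-1+\varepsilon 2^{(m+e)/2+je}$ for some $j\in\{0,\ldots,k-2\}$ and $\varepsilon=\pm1$. Matching $m-er/2=(m+e)/2+je$ gives $r=(m-e)/e-2j$, so
\[
\#\{c\in C\mid{\rm DC}(c)=-1\}=2^{mk}-1-\sum_{i=0}^{k-2}\bigl(\alpha_{(m-e)/e-2i,+1}+\alpha_{(m-e)/e-2i,-1}\bigr).
\]
Summing the two $\varepsilon$-branches of Theorem \ref{main} cancels the $\varepsilon 2^{(m-e)/2-ei}$ term and leaves a clean factor $2^{m-e-2ei}$ in front of the inner sum over $j$. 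I would then swap the order of summation, putting $u:=j$ on the outside and $i\in\{0,\ldots,u\}$ on the inside.

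The crux of the proof is the evaluation of the resulting inner sum. With $q:=4^{e}$ and the substitution $\ell=u-i$, it takes the form
\[
2^{m-e}q^{-u}\sum_{\ell=0}^{u}(-1)^\ell q^{\binom{\ell}{2}+\ell}\binom{u}{\ell}_{q},
\]
which I recognize as the specialization $x=q$ of the finite $q$-binomial theorem
\[
\sum_{\ell=0}^{u}(-1)^\ell q^{\binom{\ell}{2}}\binom{u}{\ell}_q x^\ell=\prod_{i=0}^{u-1}(1-xq^i),
\]
reducing the sum to $\prod_{i=1}^{u}(1-q^i)$. Multiplying by the external factor $\binom{(m-e)/(2e)}{u}_{4^e}$ then produces a clean telescoping against $\prod_{i=1}^{u}(q^i-1)$, leaving only the numerator of the Gaussian binomial:
\[
\binom{(m-e)/(2e)}{u}_{4^e}\prod_{i=1}^{u}(1-q^i)=(-1)^u\prod_{i=0}^{u-1}(2^{m-e-2ei}-1).
\]

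What remains is exponent bookkeeping: writing $2^{m(k-u)}-2^m=2^m(2^{m(k-1-u)}-1)$ and $\prod_{j=0}^{u-1}(2^m-2^{e(2j+1)})=2^{eu^2}\prod_{j=0}^{u-1}(2^{m-e-2ej}-1)$, then checking $m+eu^2-e(u+1)^2=m-e-2eu$, converts my expression into the precise form claimed in Theorem \ref{balanced}. The asymptotic estimate follows by replacing each $2^{m}(2^{m(k-1-u)}-1)\prod_{j}(2^{m-e-2ej}-1)$ by its leading term $2^{m(k-u)+mu}=2^{mk}$. The only nontrivial step is spotting the right specialization of the $q$-binomial identity; the main risk is a silent index shift that would break the telescoping, so I would verify the $u=0$ and $u=1$ contributions against Theorem \ref{main} by hand before trusting the general algebra.
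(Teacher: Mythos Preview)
Your proposal is correct and follows essentially the same route as the paper's own proof: both start from Theorem~\ref{main}, sum the two $\varepsilon$-branches to obtain the factor $2^{m-e-2ei}$, swap the order of summation, apply the $q$-binomial formula $\prod_{i=0}^{u-1}(1+q^it)=\sum_{i}q^{\binom{i}{2}}\binom{u}{i}_qt^i$ at $t=-q$ (your substitution $\ell=u-i$ is exactly the paper's relabelling), telescope against the denominator of $\binom{(m-e)/(2e)}{u}_{4^e}$, and finish with the same exponent bookkeeping $m+eu^2-e(u+1)^2=m-e-2eu$. The asymptotic step is also identical.
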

If $d=e=1$, then the weight enumerator of $C$ is determined by Berlekamp \cite{Ber} and Kasami \cite{Kasami71}. However, some extra calculations are needed to explicitly write out the coefficients of the weight enumerators in \cite{Ber, Kasami71}.
\section{\small{ENTERING BILINEAR FORMS I}}
 \paragraph{}
In this section we shall prove Theorem \ref{dcbound}.
 \paragraph{}
Note that
\begin{equation}\label{dcexpsumrelation}1+{\rm DC}(c_{\vec a})=\sum_{x\in{\mathbb F}_{2^m}}(-1)^{{\rm Tr}_{{\mathbb F}_{2^e}/{\mathbb F}_2}(Q_{\vec a} (x))}.\end{equation}
It is well-known that
\begin{equation}\label{quadraticexpsum}
\sum_{x\in{\mathbb F}_{2^m}}(-1)^{{\rm Tr}_{{\mathbb F}_{2^e}/{\mathbb F}_2}(Q_{\vec a} (x))}=\begin{cases}
           0,&  2 \nmid {\rm rk}(Q_{\vec a}),\\
             \pm2^{m-e\cdot\frac{{\rm rk}(Q_{\vec a})}{2})},&2 |{\rm rk}(Q_{\vec a}).
           \end{cases}
\end{equation}
 \paragraph{}
Let \[
B_{\vec {a}}(x,y)=Q_{\vec a}(x+y)-Q_{\vec a}(x)-Q_{\vec a}(y).\]
Then
$\{B_{\vec a}\}$ is either the system
\begin{equation}\label{bilinearform1}
B_{\vec {a}}(x,y)=\sum_{j=1}^{k-1}{\rm  Tr}_{\mathbb{F}_{2^{m}}/\mathbb{F}_{2^e}}(a_{j}(xy^{2^{jd}}+x^{2^{jd}}y)),~{\vec a}\in{\mathbb F}_{2^m}^k,\end{equation}
or the system
\begin{equation}\label{bilinearform2}
B_{\vec {a}}(x,y)=\sum_{j=1}^{k-1}{\rm  Tr}_{\mathbb{F}_{2^{m}}/\mathbb{F}_{2^e}}(a_{j}(xy^{2^{(2j-1)d}}+x^{2^{(2j-1)d}}y)),~{\vec a}\in{\mathbb F}_{2^m}^k,\end{equation}
or the system
\begin{equation}\label{bilinearform3}
B_{\vec {a}}(x,y)=\sum_{j=1}^{k-1}{\rm  Tr}_{\mathbb{F}_{2^{m}}/\mathbb{F}_{2^e}}(a_{j}(xy^{2^{(\frac{m+e}{2e}-j)d}}+x^{2^{(\frac{m+e}{2e}-j)d}}y)),~{\vec a}\in{\mathbb F}_{2^m}^k.\end{equation}
It is well-known that
\begin{equation}\label{rkbilinear}{\rm rk}(B_{\vec a})=\left\{
                             \begin{array}{ll}
                              {\rm rk}(Q_{\vec a}), & \hbox{} 2\mid{\rm rk}(Q_{\vec a}),\\
                               {\rm rk}(Q_{\vec a})-1, & \hbox{}2\nmid{\rm rk}(Q_{\vec a}).
                             \end{array}
                           \right.\end{equation}
 \paragraph{}
We now prove Theorem \ref{dcbound}.
By (\ref{dcexpsumrelation}), (\ref{quadraticexpsum}) and (\ref{rkbilinear}), it suffices to prove the following.
\begin{theorem}\label{rankbound}If $(a_1,\cdots,a_{k-1})\neq0$, then
\[{\rm rk}(B_{\vec a})\geq \frac{m-e}{e}-2(k-2).\]
\end{theorem}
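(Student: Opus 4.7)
The plan is to reduce Theorem \ref{rankbound} to the inequality $\dim_F \ker L_{\vec a} \le 2k-3$, where $F = \mathbb{F}_{2^e}$, $K = \mathbb{F}_{2^m}$, $n = m/e$, $q = 2^d$, and $L_{\vec a} \colon K \to K$ is the $F$-linear operator characterized by $B_{\vec a}(x,y) = \mathrm{Tr}_{K/F}(x L_{\vec a}(y))$. Such an $L_{\vec a}$ exists because $e \mid d$ makes $\mathrm{Tr}_{K/F}$ invariant under $\phi \colon y \mapsto y^q$, so identities of the form $\mathrm{Tr}_{K/F}(a x^{q^i} y) = \mathrm{Tr}_{K/F}(a^{q^{-i}} x y^{q^{-i}})$ let one gather the $y$-dependence in each of (\ref{bilinearform1})--(\ref{bilinearform3}). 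Non-degeneracy of the trace then gives $\mathrm{rk}(B_{\vec a}) = n - \dim_F \ker L_{\vec a}$, so the theorem is exactly the inequality above.

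Two structural observations will drive the argument. First, the hypothesis $(m,d) = (m,2d) = e$ forces $\gcd(m/e,2)=1$, so $n$ is odd; combined with (\ref{rkbilinear}), which forces $\mathrm{rk}(B_{\vec a})$ to be even in characteristic two, this makes $\dim_F \ker L_{\vec a}$ odd. Second, $(m,d)=e$ (resp.\ $(m,2d)=e$) gives $\mathbb{F}_q \cap K = F$ (resp.\ $\mathbb{F}_{q^2}\cap K = F$), so $\mathbb{F}_q$ and $K$ are linearly disjoint over $F$ (and similarly for $\mathbb{F}_{q^2}$); consequently, for any $F$-subspace $V \subseteq K$, its $\mathbb{F}_q$-span in $\overline K$ has $\mathbb{F}_q$-dimension equal to $\dim_F V$. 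In particular, if $V \subseteq \ker P$ for a $q$-linearized polynomial $P$ of $q$-degree $s$, then $\dim_F V \le s$, and the analogous statement holds for $q^2$-linearized polynomials.

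The core step is to exhibit, for each system, a short linearized polynomial $P \in K[y]$ whose zero set in $K$ coincides with $\ker L_{\vec a}$. For (\ref{bilinearform1}), taking $P(y) = L_{\vec a}(y)^{q^{k-1}}$ yields a $q$-linearized polynomial with exponent support $\{0, 1, \ldots, k-2, k, \ldots, 2k-2\}$, of $q$-degree $2k-2$. For (\ref{bilinearform2}), $P(y) = L_{\vec a}(y)^{q^{2k-3}}$ has support $\{0, 2, 4, \ldots, 4k-6\}$, so it is $q^2$-linearized of $q^2$-degree $2k-3$. For (\ref{bilinearform3}), the positive Frobenius exponents in $L_{\vec a}$ are $\{(n+1)/2 - j : 1 \le j \le k-1\}$, and after multiplying $L_{\vec a}$ by the Frobenius unit $\phi^{(n+2k-3)/2}$ and reducing the resulting exponents modulo $n$ using the identity $y^{q^n} = y$ on $K$, one obtains a $q$-linearized polynomial of $q$-degree $2k-3$ with support $\{0, 1, \ldots, 2k-3\}$ that represents $\phi^{(n+2k-3)/2} \circ L_{\vec a}$ as a function on $K$, so it has the same $K$-kernel as $L_{\vec a}$.

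Combining everything: for (\ref{bilinearform2}) and (\ref{bilinearform3}) the degree bound gives $\dim_F \ker L_{\vec a} \le 2k-3$ directly; for (\ref{bilinearform1}) it gives $\dim_F \ker L_{\vec a} \le 2k-2$, which the parity observation upgrades to $2k-3$. In all cases $\mathrm{rk}(B_{\vec a}) \ge n - (2k-3) = (m-e)/e - 2(k-2)$, as claimed. The main technical obstacle I anticipate is the polynomial construction for (\ref{bilinearform3}): because its Frobenius exponents cluster near $n/2$ rather than near $0$, the naive lift of $L_{\vec a}$ to a polynomial over $\overline K$ has $q$-degree of order $n$, and one must carefully use the periodicity $y^{q^n} = y$ on $K$ to collapse the support into a window of length $2k-2$ without altering the $K$-zero set.
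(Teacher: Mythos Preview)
Your approach is essentially the paper's: bound $\dim_F \mathrm{Rad}(B_{\vec a})$ by realizing the radical as the zero set in $K$ of a $q$-linearized (or $q^2$-linearized) polynomial and using linear disjointness of $K$ and $\mathbb{F}_q$ (resp.\ $\mathbb{F}_{q^2}$) over $F=\mathbb{F}_{2^e}$ to convert the degree bound into an $F$-dimension bound. You are in fact more thorough than the paper---you treat all three systems explicitly (the paper handles only (\ref{bilinearform1}) and asserts the others follow ``without loss of generality''), and you make the parity step explicit, which is needed since the paper's stated bound $\dim_{\mathbb{F}_{2^e}}\mathrm{Rad}(B_{\vec a})\le 2(k-1)$ alone gives $\mathrm{rk}(B_{\vec a})\ge n-2(k-1)$, one short of the claimed $(m-e)/e-2(k-2)=n-(2k-3)$.
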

\begin{proof} Suppose that $(a_1,\cdots,a_{k-1})\neq 0$. It suffices to show that
\[
 {\rm dim}_{\mathbb{F}_{2^e}}{\rm Rad}({B_{\vec {a}}} )\leq 2(k-1),
 \]
where
\[{\rm Rad}({B_{\vec {a}}})=\{x \in  \mathbb{F}_{2^{m}}\mid~B_{\vec a}(x,y)=0,~\forall y\in{\mathbb F}_{2^m}\}.\]
Without loss of generality, we assume that $\{B_{\vec a}\}$ is the system (\ref{bilinearform1}). Then
\[\begin{split}
{\rm Rad}({B_{\vec {a}}})&=\{x \in  \mathbb{F}_{2^{m}}|\sum_{j=1}^{k-1}(a_{j}^{2^{-jd}}x^{2^{-jd}}+a_{j}x^{2^{jd}})=0\}\\
&=\{x \in  \mathbb{F}_{2^{m}}|\sum_{j=1}^{k-1}(a_{j}^{2^{(k-1-j)d}}x^{2^{(k-1-j)d}}+a_{j}^{2^{(k-1)d}}x^{2^{(k-1+j)d}})=0\}.
\end{split}
\]Note that
\[
 \{x \in  \mathbb{F}_{2^{md/e}}|\sum_{j=1}^{k-1}(a_{j}^{2^{(k-1-j)d}}x^{2^{(k-1-j)d}}+a_{j}^{2^{(k-1)d}}x^{2^{(k-1+j)d}})=0\}.
 \]
is a subspace of $\mathbb{F}_{2^{md/e}}$ over ${\mathbb F}_{2^d}$ of dimension $\leq 2(k-1)$.
As $(m, d)=e$,
a basis of $\mathbb{F}_{2^{m}}$ over ${\mathbb{F}_{2^{e}}}$ is also a basis of $\mathbb{F}_{2^{md/e}}$ over $\mathbb{F}_{2^{d}}$.
It follows that
\[
 {\rm dim}_{\mathbb{F}_{2^e}}{\rm Rad}({B_{\vec {a}}})\leq 2(k-1).
 \]
The theorem is proved.
\end{proof}
\section{\small{ENTERING BILINEAR EQUATIONS II}}
In this section we shall reduce Theorem \ref{main} to  the following.
\begin{theorem}\label{bilinearrkdis}We have, for $0\leq i\leq k-2$,
\[\beta_{\frac{m-e}{e}-2i}=\sum_{j=i}^{k-2}(-1)^{j-i}2^{e(j-i)(j-i-1)}
\binom{j}{i}_{4^{e}}\binom{\frac{m-e}{2e}}{j}_{4^e}(2^{m(k-1-j)}-1),\]
where
\begin{equation}\label{rkfrequencydef}\beta_{r}=2^{-m}\#\{\vec{a}\in{\mathbb F}_{2^m}^k\mid~{\rm rk}(B_{\vec a})=r,~(a_1,\cdots,a_{k-1})\neq0\}.\end{equation}
\end{theorem}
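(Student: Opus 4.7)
The plan is to establish the counting identity
\[
\sum_{i=j}^{k-2}\binom{i}{j}_{4^e}\,\beta_{\frac{m-e}{e}-2i}=\binom{\frac{m-e}{2e}}{j}_{4^e}(2^{m(k-1-j)}-1)
\]
for $0\le j\le k-2$, and then to invert it to obtain the closed form for $\beta_{\frac{m-e}{e}-2i}$ by a $q$-M\"obius argument with $q=4^e$. That the stated formula is the inversion of this identity is a formal verification: using the identity $\binom{j}{i}_q\binom{l}{j}_q=\binom{l}{i}_q\binom{l-i}{j-i}_q$ together with the $q$-binomial theorem $\sum_{m\geq 0}(-1)^m q^{\binom{m}{2}}\binom{n}{m}_q=\prod_{\ell=0}^{n-1}(1-q^\ell)$ (which vanishes whenever $n\ge 1$), one checks that the lower-triangular matrix $(\binom{i}{j}_q)_{j\le i}$ and the upper-triangular matrix $((-1)^{j-i}q^{\binom{j-i}{2}}\binom{j}{i}_q)_{i\le j}$ are mutually inverse.

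I plan to prove the displayed identity by a double count of pairs $(\vec a,W)$, where $\vec a\in\mathbb F_{2^m}^k$ satisfies $(a_1,\ldots,a_{k-1})\neq 0$ and $W$ is a ``$j$-dimensional $\mathbb F_{4^e}$-subspace of ${\rm Rad}(B_{\vec a})$'' in a sense to be made precise. First observe that $n:=m/e$ is odd, since $(m,d)=(m,2d)$ forces $n$ to be coprime to $2$; hence $\dim_{\mathbb F_{2^e}}{\rm Rad}(B_{\vec a})=n-{\rm rk}(B_{\vec a})$ is always odd (as ${\rm rk}(B_{\vec a})$ is even for an alternating form), matching the $2i+1$ in the indexing. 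Without loss of generality one works with the first family (\ref{bilinearform1}); the other two families are handled in parallel. The crucial structural input is that the linearized polynomial
\[
L_{\vec a}(x)=\sum_{\ell=1}^{k-1}\bigl(a_\ell^{2^{-\ell d}}x^{2^{-\ell d}}+a_\ell x^{2^{\ell d}}\bigr)
\]
is invariant under the formal substitution $\ell\mapsto-\ell$; this symmetry equips $\ker L_{\vec a}={\rm Rad}(B_{\vec a})$ with a canonical decomposition into a distinguished $1$-dimensional $\mathbb F_{2^e}$-subspace and an intrinsic $\mathbb F_{4^e}$-vector space of dimension $i$ (where $2i+1=\dim_{\mathbb F_{2^e}}{\rm Rad}(B_{\vec a})$). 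The $j$-dimensional $\mathbb F_{4^e}$-subspaces of this structure are then counted by $\binom{i}{j}_{4^e}$, producing the left-hand side when summed over $\vec a$.

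Counting the same pairs first by $W$: the admissible $W$'s live in an ambient $\mathbb F_{4^e}$-Grassmannian of total dimension $(m-e)/(2e)=(n-1)/2$, contributing the factor $\binom{(m-e)/(2e)}{j}_{4^e}$. For fixed $W$, after the substitutions $b_\ell=a_\ell^{2^{-\ell d}}$, the condition $W\subseteq{\rm Rad}(B_{\vec a})$ becomes a system of $j$ linear equations over $\mathbb F_{2^m}$ in the unknowns $a_1,\ldots,a_{k-1}$ (with $a_0$ unconstrained). The hypothesis $k\le(m+e)/(2e)$ guarantees that these equations are linearly independent, so the solution set has cardinality $2^{m(k-j)}$; subtracting the $2^m$ trivial solutions with $(a_1,\ldots,a_{k-1})=0$ yields the factor $2^m(2^{m(k-1-j)}-1)$.

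The main obstacle is giving a clean construction of the intrinsic $\mathbb F_{4^e}$-module structure on ${\rm Rad}(B_{\vec a})$ (and on the corresponding ambient Grassmannian) that justifies the base $4^e$ in the Gaussian binomials. Although $\mathbb F_{4^e}$ is not a subfield of $\mathbb F_{2^m}$ (since $n$ is odd), the pairing $\ell\leftrightarrow-\ell$ in $L_{\vec a}$ provides, after passing to the algebraic closure and descending via a norm-type argument, a canonical identification under which $\mathbb F_{4^e}=\mathbb F_{2^{2e}}$ acts on the relevant Grassmannian. Once this structure is in place, verifying the independence of the $j$ fiber conditions and matching the two counts reduces to routine linear algebra.
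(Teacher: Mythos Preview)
Your overall shape---establish the identity
\[
\sum_{i\ge j}\binom{i}{j}_{4^e}\,\beta_{\frac{m-e}{e}-2i}=\binom{\frac{m-e}{2e}}{j}_{4^e}(2^{m(k-1-j)}-1)
\]
and then invert by $q$-M\"obius---matches exactly what the paper does in its number-theoretic proof (Section~5), and the inversion step is fine. The gap is in how you propose to prove the identity itself.

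The central difficulty is the ``intrinsic $\mathbb{F}_{4^e}$-module structure'' on ${\rm Rad}(B_{\vec a})$ that your double count requires. As you note, $m/e$ is odd, so $\mathbb{F}_{4^e}\not\subseteq\mathbb{F}_{2^m}$ and no such action exists literally; your suggestion that the palindromic symmetry $\ell\leftrightarrow-\ell$ of $L_{\vec a}$ manufactures one ``via a norm-type argument'' does not hold up. That symmetry is what makes $L_{\vec a}$ the linearization of an \emph{alternating} (rather than arbitrary bilinear) form; it does not endow the kernel with a quadratic-extension action. The base $4^e=(2^e)^2$ in the Gaussian binomials is not the trace of a hidden $\mathbb{F}_{4^e}$-action: it reflects the fact that alternating forms over $\mathbb{F}_{2^e}$ have even rank, so the rank-stratified association scheme has $q^2$-Krawtchouk eigenvalues. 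This is precisely the content of the Delsarte--Goethals theorem the paper cites in Section~4.

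There is a second, related problem in your fibre count. You assert that after the substitution $b_\ell=a_\ell^{2^{-\ell d}}$ the condition $W\subseteq{\rm Rad}(B_{\vec a})$ becomes $j$ \emph{$\mathbb{F}_{2^m}$-linear} equations in the $a_\ell$. It does not: writing out $L_{\vec a}(w)=0$ gives $\sum_\ell(a_\ell^{2^{-\ell d}}w^{2^{-\ell d}}+a_\ell w^{2^{\ell d}})=0$, and no Frobenius substitution removes both twists simultaneously. The map $(a_1,\dots,a_{k-1})\mapsto B_{\vec a}$ is only $\mathbb{F}_{2^e}$-linear, and for a generic $(2j{+}1)$-dimensional $\mathbb{F}_{2^e}$-subspace $W\subseteq\mathbb{F}_{2^m}$ the solution count is \emph{not} $2^{m(k-1-j)}$; so even the ``ambient Grassmannian'' side of your double count is ill-posed without the missing structure.

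The paper avoids this entirely. Its first proof (Section~4) simply observes that $Y=\{B_{\vec a}:a_0=0\}$ is a subspace of alternating $\mathbb{F}_{2^e}$-forms with $|Y|=2^{m(k-1)}$ and $d(Y)\ge\frac{m-e}{e}-2(k-2)$ (Theorem~2.1), which forces equality in the Delsarte--Goethals bound and hence determines the full rank distribution. Its second, self-contained proof (Section~5) obtains the same upper-triangular identity not by double counting but by computing power moments: orthogonality gives $\sum_{\vec a}\bigl(\sum_{x,y}(-1)^{{\rm Tr}(B_{\vec a}(x,y))}\bigr)^u=2^{mk}|V_{u,u}|$, which yields $\sum_i\beta_{\frac{m-e}{e}-2i}4^{eui}$ in terms of the solution counts $|V_{u,u}|$ of certain bilinear systems. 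A row-reduction using the $q$-binomial theorem, together with the key combinatorial lemma $V_{s,u}=V_{u,u}$ for $s\ge u$ (Theorem~5.1), then produces exactly the identity you want. If you wish to make your plan work, the realistic route is to replace the phantom $\mathbb{F}_{4^e}$-Grassmannian by one of these two mechanisms.
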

It suffices to prove the following.
\begin{theorem}\label{signeffect}For each $r=0,2,\cdots,\frac{m-e}{e}$, \[\alpha_{r,\varepsilon} =\frac12(2^{er}+\varepsilon2^{\frac{er}{2}})\beta_r,\]
\end{theorem}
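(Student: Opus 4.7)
The plan is to fix the ``bilinear part'' $\vec a':=(a_1,\ldots,a_{k-1})$ and carry out a short two-moment character analysis in the remaining coefficient $a_0$. Write
\[Q_{\vec a}(x)={\rm Tr}_{\mathbb F_{2^m}/\mathbb F_{2^e}}(a_0x)+Q^{(0)}(x),\]
where $Q^{(0)}$ depends only on $\vec a'$, and set
\[S(a_0):=\sum_{x\in\mathbb F_{2^m}}(-1)^{{\rm Tr}_{\mathbb F_{2^e}/\mathbb F_2}(Q_{\vec a}(x))}=1+{\rm DC}(c_{\vec a}).\]
By trace transitivity, ${\rm Tr}_{\mathbb F_{2^e}/\mathbb F_2}\circ{\rm Tr}_{\mathbb F_{2^m}/\mathbb F_{2^e}}={\rm Tr}_{\mathbb F_{2^m}/\mathbb F_2}$, so the variable $a_0$ enters $S(a_0)$ only through the additive character $x\mapsto(-1)^{{\rm Tr}_{\mathbb F_{2^m}/\mathbb F_2}(a_0x)}$. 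Now fix $\vec a'\neq 0$ with ${\rm rk}(B_{\vec a'})=r$; by (\ref{dcexpsumrelation})--(\ref{rkbilinear}), $S(a_0)\in\{0,\pm 2^{m-er/2}\}$.

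Next I would compute the first two moments of $S(a_0)$ over $a_0\in\mathbb F_{2^m}$. Interchanging the two sums and applying additive-character orthogonality forces $x=0$ in the first moment, and $y=x$ in the second; in the latter case $Q^{(0)}(x)+Q^{(0)}(y)=2Q^{(0)}(x)=0$ in characteristic two. This gives $\sum_{a_0}S(a_0)=2^m$ and $\sum_{a_0}S(a_0)^2=2^{2m}$. Letting $N_\varepsilon:=\#\{a_0\in\mathbb F_{2^m}:S(a_0)=\varepsilon 2^{m-er/2}\}$, these two identities become the linear system
\[(N_+-N_-)\,2^{m-er/2}=2^m,\qquad (N_++N_-)\,2^{2m-er}=2^{2m},\]
whose solution is $N_\varepsilon=\tfrac12(2^{er}+\varepsilon 2^{er/2})$.

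It remains to sum over $\vec a'$. Since $B_{\vec a}$ depends only on $\vec a'$, the factor $2^{-m}$ in (\ref{rkfrequencydef}) absorbs the free coefficient $a_0$, so $\beta_r=\#\{\vec a'\neq 0:{\rm rk}(B_{\vec a'})=r\}$. For each such $\vec a'$, every $a_0\in\mathbb F_{2^m}$ yields a nonzero $\vec a$, and exactly $N_\varepsilon$ of them produce ${\rm DC}(c_{\vec a})=-1+\varepsilon 2^{m-er/2}$. The case $\vec a'=0$ needs a brief separate check: then $Q^{(0)}\equiv 0$, so $S(a_0)\in\{0,2^m\}$, and the unique $a_0$ giving $S=2^m$ corresponds to the zero codeword, which is excluded from $\alpha_{r,\varepsilon}$. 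Hence the $\vec a'=0$ case contributes nothing, and summing yields
\[\alpha_{r,\varepsilon}=\beta_r\cdot N_\varepsilon=\tfrac12(2^{er}+\varepsilon 2^{er/2})\beta_r.\]

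This is essentially a standard Walsh--Hadamard moment computation, so no step is a serious obstacle. The only subtlety worth flagging is the bookkeeping at $\vec a'=0$: without isolating that case, the zero codeword would register a spurious contribution at $r=0$, $\varepsilon=+1$, corrupting the identity.
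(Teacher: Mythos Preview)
Your proof is correct and follows essentially the same two-moment argument as the paper. The only cosmetic difference is organizational: the paper sums over all $\vec a$ with ${\rm rk}(B_{\vec a})=r$ at once and introduces an auxiliary averaging variable $c\in\mathbb F_{2^m}$ (exploiting the $a_0$-shift invariance of the rank condition) before applying character orthogonality, whereas you fix $\vec a'$ first and compute the moments of $S(a_0)$ directly; both routes yield the same $2\times 2$ linear system, and your explicit handling of the $\vec a'=0$ boundary case is a nice point of care.
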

\begin{proof} By (\ref{dcfrequencydef}), (\ref{dcexpsumrelation}), (\ref{quadraticexpsum}), (\ref{rkbilinear}), and (\ref{rkfrequencydef}), \[\begin{split}&2^{m-\frac{er}{2}}(\alpha_{r,1} -\alpha_{r,-1})\\&=
\sum_{{\rm rk}(B_{\vec a} )=r}\sum_{x\in{\mathbb F}_{2^m}}(-1)^{{\rm Tr}_{{\mathbb F}_{2^e}/{\mathbb F}_2}(Q_{\vec a} (x))}\\&=2^{-m}\sum_{c\in{\mathbb F}_{2^m}}\sum_{{\rm rk}(B_{\vec a} )=r}\sum_{x\in{\mathbb F}_{2^m}}(-1)^{{\rm Tr}_{{\mathbb F}_{2^e}/{\mathbb F}_2}({\rm Tr}_{{\mathbb F}_{2^m}/{\mathbb F}_{2^e}}(cx)+Q_{\vec a} (x))}\\&=2^{-m}\sum_{{\rm rk}(B_{\vec a} )=r}\sum_{x\in{\mathbb F}_{2^m}}(-1)^{{\rm Tr}_{{\mathbb F}_{2^e}/{\mathbb F}_2}(Q_{\vec a} (x))}\sum_{c\in{\mathbb F}_{2^m}}(-1)^{{\rm Tr}_{{\mathbb F}_{2^m}/{\mathbb F}_{2}}(cx)}\\
&=2^{m}
\beta_r.\end{split}
\]
Similarly,
 \[\begin{split}&2^{2m-er}(\alpha_{r,1} +\alpha_{r,-1})\\&=
\sum_{{\rm rk}(B_{\vec a} )=r}(\sum_{x\in{\mathbb F}_{2^m}}(-1)^{{\rm Tr}_{{\mathbb F}_{2^e}/{\mathbb F}_2}(Q_{\vec a} (x))})^2\\&=2^{-m}\sum_{c\in{\mathbb F}_{2^m}}\sum_{{\rm rk}(B_{\vec a} )=r}(\sum_{x\in{\mathbb F}_{2^m}}(-1)^{{\rm Tr}_{{\mathbb F}_{2^e}/{\mathbb F}_2}({\rm Tr}_{{\mathbb F}_{2^m}/{\mathbb F}_{2^e}}(cx)+Q_{\vec a} (x))})^2\\&=2^{-m}\sum_{{\rm rk}(B_{\vec a} )=r}\sum_{x,y\in{\mathbb F}_{2^m}}(-1)^{{\rm Tr}_{{\mathbb F}_{2^e}/{\mathbb F}_2}(Q_{\vec a} (x)+Q_{\vec a} (y))}\sum_{c\in{\mathbb F}_{2^m}}(-1)^{{\rm Tr}_{{\mathbb F}_{2^m}/{\mathbb F}_{2}}(c(x+y))}\\
&=2^{2m}
\beta_r.\end{split}
\]
The theorem is proved.\end{proof}
\section{\small{ASSOCIATION SCHEME THEORETIC APPROACH}}
In this section we shall
use the following theorem of Delarte-Goethals to prove Theorem \ref{bilinearrkdis}.
\begin{theorem}[\cite{DG}] Let $M$ be an odd number, $X$ the space of alternating bilinear forms on an $M$-dimension vector space over ${\mathbb F}_{q}$, $Y$ a subspace of $X$, and
\[d(Y)=\min\{{\rm rk}(y)\mid 0\neq y\in Y\}.\]
Then
\[|Y|\leq q^{M(M-d(Y)+1)/2}.\]
Moreover, if the equality holds, then, for $i\leq (M-1-d(Y))/2$,
\[\begin{split}&\#\{y\in Y\mid~{\rm rk}(y)=M-1-2i\}\\&=\sum_{j=i}^{(M-1-d(Y))/2}(-1)^{j-i}q^{(j-i)(j-i-1)}
\binom{j}{i}_{q^2}\binom{(M-1)/2}{j}_{q^2}(q^{M(M-d(Y)+1-2j)/2}-1).\end{split}\]
\end{theorem}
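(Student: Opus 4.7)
The plan is to prove the Delsarte-Goethals bound and the extremal distance distribution by working in the translation association scheme $\mathcal{X}$ whose points are the $M\times M$ alternating matrices over $\mathbb{F}_{q}$, with relations $R_{r}$ given by $(A,A')\in R_{r}\iff \mathrm{rk}(A-A')=r$, where $r$ ranges over the even integers $0,2,\ldots,M-1$ (since $M$ is odd, these are all the possible ranks). The subspace $Y$ is an additive code in this scheme, and the Delsarte machinery of inner distributions and dual distributions is the natural framework.

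First I would record the eigenvalues. The $P$- and $Q$-matrices of the alternating bilinear forms scheme are known in closed form as polynomials in $q$ with Gaussian binomial coefficients $\binom{\cdot}{\cdot}_{q^{2}}$; I would either cite these from the Delsarte-Goethals paper or re-derive them via the character sum $\sum_{B}\chi(\mathrm{tr}(AB))$, whose evaluation reduces to counting alternating matrices of a prescribed rank, a standard $q^{2}$-Gaussian count. These formulas make $Q_{s}(r)$ explicit and, in particular, polynomial of degree $s$ in a suitable $q^{2}$-variable in $r$.

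Next I would set up the linear programming bound. Because $Y$ is a subspace (hence an additive code), the MacWilliams transform produces a nonnegative dual distribution $a_{s}^{\perp}=|Y|^{-1}\sum_{r}Q_{s}(r)a_{r}$. With the minimum-rank hypothesis $a_{r}=0$ for $0<r<d(Y)$, apply the Delsarte LP bound: choose a polynomial $\beta(r)=\sum_{s}\beta_{s}Q_{s}(r)$ with $\beta_{0}>0$, $\beta(0)>0$, and $\beta(r)\leq 0$ for $r\geq d(Y)$. The sharp choice is the $q^{2}$-analogue of the annihilator polynomial supported on the rank parameters up to $d(Y)-2$; a careful evaluation using orthogonality of the $Q$-polynomials then yields $|Y|\leq q^{M(M-d(Y)+1)/2}$.

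Finally, in the equality case the LP bound is tight, which forces $a_{s}^{\perp}=0$ for $s$ outside a specific interval; the remaining $a_{s}^{\perp}$ (with $s\leq (M-1-d(Y))/2$) are then determined by the constraints $a_{r}=0$ for $0<r<d(Y)$ together with $\sum_{r}a_{r}=|Y|$. Inverting the MacWilliams transform under these constraints gives a closed-form expression for $\#\{y\in Y:\mathrm{rk}(y)=M-1-2i\}$, and a $q^{2}$-Möbius inversion on the Gaussian binomial poset collapses it to the displayed formula. The main obstacle will be identifying the correct extremal multiplier polynomial $\beta$ in Step 2: producing it requires both calculational fluency with $q^{2}$-Gaussian binomials and verifying that all LP constraints are met with equality on exactly the ranks occurring in a tight code. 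Once $\beta$ is in hand, the extremal distribution falls out by algebraic simplification.
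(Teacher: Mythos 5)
First, a point of orientation: the paper does not actually prove this statement. It is quoted from Delsarte--Goethals and used as a black box in Section 4; the purpose of Sections 5--6 is precisely to bypass it, by giving an elementary, self-contained derivation of the one consequence that is actually needed (Theorem \ref{bilinearrkdis}) via power moments of the exponential sums, the solution-counting identity $V_{s,u}=V_{u,u}$ (Theorem \ref{lineardependence}), and $q$-binomial M\"obius inversion. So your proposal has to be judged as an attempt at the cited theorem itself, not against an in-paper argument; the paper's route trades the generality of the Delsarte--Goethals theorem (arbitrary extremal subspaces $Y$) for an explicit computation that only covers the three specific families of forms it cares about.

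What you sketch is the canonical association-scheme proof, and the architecture (alternating-forms scheme with rank relations, dual distribution of an additive code, annihilator/LP bound, determination of the inner distribution in the extremal case) is the right one. But as written it has two genuine gaps. First, the entire weight of the bound $|Y|\le q^{M(M-d(Y)+1)/2}$ rests on producing the extremal multiplier polynomial and evaluating it against the explicit eigenvalues $Q_s(r)$ of the scheme; you defer both, and without the closed form for $Q_s(r)$ (itself a nontrivial character-sum/rank-counting computation) the inequality is simply not established --- the exponent $M(M-d+1)/2$ has to come out of that evaluation, not be assumed. Second, in the equality case you assert that the surviving dual coefficients ``are then determined''; the actual content is that tightness forces $a^{\perp}_s=0$ for $1\le s\le (M-1-d(Y))/2$, which together with $a_r=0$ for $0<r<d(Y)$ and $\sum_r a_r=|Y|$ gives exactly as many independent linear conditions as unknowns, and the displayed formula is the explicit solution of that linear system. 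That inversion is essentially the same computation the paper carries out by hand in Section 5 (there with the counts $|V_{u,u}|$ playing the role of the power moments), which is good evidence the plan closes; but your write-up stops short of a proof at precisely the two steps where all the work lies.
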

We now the above theorem to prove Theorem \ref{bilinearrkdis}.\paragraph{}
Let $X$ be the space of alternating ${\mathbb F}_{2^e}$-bilinear forms on ${\mathbb F}_{2^m}$. Fix a system $\{B_{\vec a}\}$. Set
\[Y=\{B_{\vec a}\mid~\vec{a}\in{\mathbb F}_{2^m}^{k},a_0=0\}.\]
By Theorem \ref{rankbound}, \[d(Y)\geq \frac{m-e}{e}-2(k-2).\]
By Delsarte-Goethals' theorem,
\[|Y|\leq 2^{m(\frac{m+e}{e}-d(Y))/2}\leq 2^{m(k-1)}.\]
As $|Y|=2^{m(k-1)}$, we arrive at
\[|Y|=2^{m(\frac{m+e}{e}-d(Y))/2}= 2^{m(k-1)}.\]
In particular,
$d(Y)=\frac{m-e}{e}-2(k-2)$.
Applying Delsarte-Goethals' theorem one more time, we have, for $0\leq i\leq k-2$,
\[\begin{split}&\#\{\vec{a}\in {\mathbb F}_{2^m}^{k}\mid~{\rm rk}(B_{\vec a})=\frac{m-e}{e}-2i,a_0=0\}\\&=\sum_{j=i}^{k-2}(-1)^{j-i}2^{e(j-i)(j-i-1)}
\binom{j}{i}_{4^{e}}\binom{\frac{m-e}{2e}}{j}_{4^e}(2^{m(k-1-j)}-1).\end{split}\]
Theorem \ref{bilinearrkdis} is proved.

\section{\small{NUMBER THEORETIC APPROACH}}
\paragraph{}
The theorem of Delarte-Goethals we used in the last section is proved by developing the theory of association schemes.
To make the present paper self-contained, we shall develop a number theoretic approach, which is similar to the approach of Berlekamp \cite{Ber} and Kasami \cite{Kasami71}.\paragraph{}
Let $V_{s,u}$ be the set of solutions
$(x_1,x_2,\cdots,x_{2u})\in{\mathbb F}_{2^m}^{2u}$ of one of the systems
\begin{equation}\label{bilinearminimalnumber1}
\sum_{i=1}^u(x_{2i-1}x_{2i}^{2^{jd}}+x_{2i-1}^{2^{jd}}x_{2i})=0,~j=1,2,\cdots,s,
\end{equation}
\begin{equation}\label{bilinearminimalnumber2}
\sum_{i=1}^u(x_{2i-1}x_{2i}^{2^{(2j-1)d}}+x_{2i-1}^{2^{(2j-1)d}}x_{2i})=0,~j=1,2,\cdots,s,
\end{equation}
and
\begin{equation}\label{bilinearminimalnumber3}
\sum_{i=1}^u(x_{2i-1}x_{2i}^{2^{(\frac{m+e}{2e}-j)d}}+x_{2i-1}^{2^{(\frac{m+e}{2e}-j)d}}x_{2i})=0,~j=1,2,\cdots,s.
\end{equation}
In this section we shall use the following theorem to prove Theorem \ref{bilinearrkdis}.
\begin{theorem}\label{lineardependence}
If $s\geq u\geq1$, then $V_{s,u}=V_{u,u}$.
\end{theorem}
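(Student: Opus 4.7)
The plan is to argue by induction on $s\ge u$: the containment $V_{s,u}\subseteq V_{u,u}$ is automatic since imposing more equations can only shrink the solution set, so the real content is that every solution of the first $u$ equations must also satisfy the equations with index $j>u$. I treat system (\ref{bilinearminimalnumber1}) in detail; systems (\ref{bilinearminimalnumber2}) and (\ref{bilinearminimalnumber3}) are handled in parallel after relabelling the Frobenius exponents. Writing $q=2^d$, $a_i=x_{2i-1}$, $b_i=x_{2i}$, and $\theta_j=\sum_{i=1}^u a_ib_i^{q^j}$, one checks that $\theta_{-j}^{q^j}=\sum_i a_i^{q^j}b_i$, so the $j$-th equation reads $\theta_j=\theta_{-j}^{q^j}$ in characteristic $2$.

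The first step is to extract a Frobenius-twisted linear recurrence of length at most $u$ for the sequence $(\theta_j)_{j\in\mathbb{Z}}$. Since the $u{+}1$ column vectors $(b_i^{q^j})_{i=1}^u$ indexed by $j=0,1,\ldots,u$ lie in the $u$-dimensional space $\mathbb{F}_{2^m}^u$, there exist $\lambda_0,\ldots,\lambda_u\in\mathbb{F}_{2^m}$, not all zero, with $\sum_j \lambda_jb_i^{q^j}=0$ for every $i$; applying the Frobenius $\sigma=(\cdot)^q$ shifts the exponent, and multiplying by $a_i$ and summing over $i$ produces $\sum_j \sigma^t(\lambda_j)\theta_{j+t}=0$ for every $t\in\mathbb{Z}$. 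Using this recurrence at $t=1$ expresses $\theta_{u+1}$ as an $\mathbb{F}_{2^m}$-linear combination of $\theta_1,\ldots,\theta_u$; using it at $t=-u-1$ expresses $\theta_{-u-1}$ in terms of $\theta_{-1},\ldots,\theta_{-u}$, each of which is then rewritten as $\sigma^{-j}(\theta_j)$ using the hypothesis $\theta_j=\theta_{-j}^{q^j}$. Adding these two pieces, after applying $\sigma^{u+1}$ to the second, yields an explicit formula for $f_{u+1}=\theta_{u+1}+\sigma^{u+1}(\theta_{-u-1})$ purely in terms of $\theta_1,\ldots,\theta_u$.

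The hard part will be showing that this combination vanishes. To force the cancellation I plan to invoke the $x\leftrightarrow y$ symmetry of the pairing $B_j(x,y)=xy^{q^j}+x^{q^j}y$: swapping $(a_i)\leftrightarrow(b_i)$ preserves every $f_j$, so running the same construction with the roles of $a_i$ and $b_i$ interchanged produces a companion recurrence for $\theta'_j=\sum_i b_ia_i^{q^j}$ whose coefficients, after matching Frobenius twists, agree with the original ones in precisely the way needed for the two pieces of $f_{u+1}$ to telescope. Once $V_{u+1,u}=V_{u,u}$ is established, the inductive step $V_{s,u}=V_{u,u}\Rightarrow V_{s+1,u}=V_{u,u}$ follows by iterating the same recurrence at successive values of $t$. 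For systems (\ref{bilinearminimalnumber2}) and (\ref{bilinearminimalnumber3}) the Frobenius exponents $q^j$ are replaced throughout by $q^{2j-1}$ and $q^{(m+e)/(2e)-j}$ respectively, but the $u{+}1$-vectors-in-a-$u$-dimensional-space linear dependence remains valid, and the symmetric cancellation transfers verbatim, so the proof is structurally identical.
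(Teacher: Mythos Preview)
Your plan has a genuine gap at exactly the point you flag as ``the hard part.'' From the $b_i$'s you obtain a Frobenius-twisted recurrence $\sum_{j}\sigma^t(\lambda_j)\,\theta_{j+t}=0$, and by the $a\leftrightarrow b$ symmetry a companion recurrence $\sum_{j}\sigma^t(\mu_j)\,\theta'_{j+t}=0$, together with the hypotheses $\theta_j=\theta'_j$ for $j=1,\dots,u$. But the $\lambda$'s are functions of the $b_i$'s alone and the $\mu$'s of the $a_i$'s alone; nothing in your argument relates them. Concretely, your recurrence (at $t=1$) gives $\theta_{u+1}=F_\lambda(\theta_1,\dots,\theta_u)$, while the companion gives $\theta'_{u+1}=F_\mu(\theta_1,\dots,\theta_u)$ with \emph{different} functionals $F_\lambda\neq F_\mu$ in general. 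The $x\leftrightarrow y$ symmetry only tells you that the desired identity $\theta_{u+1}=\theta'_{u+1}$ is invariant under the swap; it does not prove it. The claim that ``after matching Frobenius twists'' the two pieces ``telescope'' is not substantiated, and I do not see how to make it so without a genuinely new idea. (There is also the secondary issue that you need $\lambda_u\neq 0$ and $\lambda_0\neq 0$; this is repairable by taking a minimal dependence and iterating, but you should say so.)

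The paper's argument is structurally different and bypasses this difficulty. It first proves an elimination lemma (Theorem~\ref{elimination}): the system of $s$ equations in the $x_i$ is equivalent to its first equation together with a system of $s-1$ equations of the \emph{same shape} in the transformed variables $\tilde x_i=x_i+x_i^{2^d}$ (respectively $x_i+x_i^{2^{2d}}$, $x_i+x_i^{2^{-d}}$). One then inducts on $u$: scaling so that $x_{2u}=1$ forces $\tilde x_{2u}=0$, so the last pair drops out of the transformed system and one lands in $V_{u-1,u-1}$; the inductive hypothesis puts the $\tilde x$'s in $V_{s-1,u-1}$, and the elimination lemma run backwards returns the original $x$'s to $V_{s,u}$. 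The cancellation that is opaque in your recurrence picture is made explicit by this change of variables.
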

We now prove Theorem \ref{bilinearrkdis}.
We shall make repeated use of the following $q$-binomial formula
\[\prod_{i=0}^{u-1}(1+q^it)=\sum_{i=0}^uq^{\binom{i}{2}}\binom{u}{i}_qt^{i}.\]
By the orthogonality of characters and Theorem \ref{lineardependence}, we have
\[
\sum_{{\vec a}\in\mathbb{F}_{2^{m}}^{k}}\big(\sum_{x,y\in{\mathbb F}_{2^m}}(-1)^{{\rm Tr}_{{\mathbb F}_{2^e}/{\mathbb F}_2}(B_{\vec a}(x,y))}\big)^u=2^{mk}|V_{u,u}|,~0\leq u\leq k-1,
\]
where $|V_{0,0}|=1$.
Applying the identity
\[\sum_{x,y\in{\mathbb F}_{2^m}}(-1)^{{\rm Tr}_{{\mathbb F}_{2^e}/{\mathbb F}_2}(B_{\vec a}(x,y))}=2^{2m-e\cdot{\rm rk}(B_{\vec a})},
\]
we arrive at \[\sum_{2\mid r=\frac{m-e}{e}-2(k-2)}^{\frac{m-e}{e}}\beta_r2^{u(2m-er)}=2^{m(k-1)}|V_{u,u}|-2^{2mu},~0\leq u\leq k-1.\]
That is,
\[\sum_{i=0}^{k-2}\beta_{\frac{m-e}{e}-2i}4^{eui}=2^{m(k-1)-(m+e)u}|V_{u,u}|-2^{(m-e)u},
~0\leq u\leq k-1.\]
Consider the equation
\[\sum_{i=0}^{k-2}\beta_{\frac{m-e}{e}-2i}\left(
                                            \begin{array}{c}
                                              1 \\
                                              4^{ei} \\
                                              \vdots \\
                                              4^{e(k-1)i} \\
                                            \end{array}
                                          \right)
=\left(
   \begin{array}{c}
      2^{m(k-1)}|V_{0,0}|-1 \\
     2^{m(k-1)-(m+e)}|V_{1,1}|-2^{m-e} \\
     \vdots \\
     2^{-(k-1)e}|V_{k-1,k-1}|-2^{u(m-e)} \\
   \end{array}
 \right)\]
Multiplying on the left by the row vector $((-1)^{k-1-i}4^{e\binom{k-1-i}{2}}\binom{k-1}{i}_{4^e})_{i=0}^{k-1}$, and applying the $q$-binomial formula, we arrive at
\[\sum_{i=0}^{k-1}(-1)^{k-1-i}4^{e\binom{k-1-i}{2}}\binom{k-1}{i}_{4^e}(2^{m(k-1)-(m+e)i}|V_{i,i}|-2^{(m-e)i})=0.\] Applying the $q$-binomial formula once more, we arrive at
\[\sum_{i=0}^{k-1}(-1)^{k-1-i}4^{e\binom{k-1-i}{2}}\binom{k-1}{i}_{4^e}2^{m(k-1)-(m+e)i}|V_{i,i}|
=\prod_{i=0}^{k-1}(2^{(m-e)i}-4^{ei}).\]
Replacing $k-1$ with an arbitrary positive integer $u$, we arrive at
\[\sum_{i=0}^{u}(-1)^{u-i}4^{e\binom{u-i}{2}}\binom{u}{i}_{4^e}2^{mu-(m+e)i}|V_{i,i}|
=\prod_{i=0}^{k-1}(2^{(m-e)i}-4^{ei}).\]
That is,
\begin{equation}\label{recursiverelation}\sum_{i=0}^{u}(-1)^{u-i}4^{e\binom{u-i}{2}}\binom{u}{i}_{4^e}2^{-(m+e)i}|V_{i,i}|
=2^{-mu}\prod_{i=0}^{k-1}(2^{(m-e)i}-4^{ei}).\end{equation}
Now fix $0\leq u\leq k-1$, and consider the equation
\[\sum_{i=0}^{k-2}\beta_{\frac{m-e}{e}-2i}\left(
                                            \begin{array}{c}
                                              1 \\
                                              4^{ei} \\
                                              \vdots \\
                                              4^{eui} \\
                                            \end{array}
                                          \right)
=\left(
   \begin{array}{c}
      2^{m(k-1)}|V_{0,0}|-1 \\
     2^{m(k-1)-(m+e)}|V_{1,1}|-2^{m-e} \\
     \vdots \\
     2^{-(k-1)e}|V_{u,u}|-2^{u(m-e)} \\
   \end{array}
 \right)\]
Multiplying on the left by the row vector $((-1)^{u-i}q^{\binom{u-i}{2}}\binom{u}{i}_{q})_{i=0}^u$, and applying the $q$-binomial formula as well as (\ref{recursiverelation}), we arrive at
\[\sum_{i=u}^{k-2}\beta_{\frac{m-e}{e}-2i}\prod_{0\leq h\leq u-1}(4^{ei}-4^{eh})
=(2^{m(k-1-u)}-1)\prod_{0\leq h\leq u-1}(2^{m-e}-4^{eh}).
\]
Dividing both sides by $\prod_{0\leq h\leq u-1}(4^{eu}-4^{eh})$, we arrive at
\[\sum_{i=u}^{k-2}\beta_{\frac{m-e}{e}-2i}\binom{i}{u}_{4^e}
=\binom{\frac{m-e}{2e}}{u}_{4^e}(2^{m(k-1-u)}-1).
\]
Applying the
$q$-binomial M\"{o}bius inversion formula
\[ \sum_{i=v}^u(-1)^{i-v}q^{\binom{i-v}{2}}\binom{i}{v}_{q}\binom{u}{i}_{q}=\left\{
                                                                              \begin{array}{ll}
                                                                                1, & \hbox{} u=v,\\
                                                                                0, & \hbox{}u\neq v,
                                                                              \end{array}
                                                                            \right.
\]
we arrive at
\[\beta_{\frac{m-e}{e}-2j}
=\sum_{u=j}^{k-2}(-1)^{u-j}4^{e\binom{u-j}{2}}\binom{\frac{m-e}{2e}}{u}_{4^e}\binom{u}{j}_{4^e}(2^{m(k-1-u)}-1).
\]
Theorem \ref{bilinearrkdis} is proved.
\section{\small{SYSTEMS OF BILINEAR EQUATIONS}}
In this section we shall prove Theorem \ref{lineardependence}.
We begin with the following.
\begin{theorem}\label{elimination}The systems (\ref{bilinearminimalnumber1}), (\ref{bilinearminimalnumber2}), and (\ref{bilinearminimalnumber3}) are respectively equivalent to the systems
\begin{equation}\label{bilinearminimalnumber1e}
\begin{cases}\sum_{i=1}^u(x_{2i-1}x_{2i}^{2^{d}}+x_{2i-1}^{2^{d}}x_{2i})=0,\\
\sum_{i=1}^{u}(\tilde{x}_{2i-1}\tilde{x}_{2i}^{2^{jd}}+\tilde{x}_{2i-1}^{2^{jd}}\tilde{x}_{2i})=0,\\
j=1,2,\cdots,s-1,\end{cases}\end{equation}
\begin{equation}\label{bilinearminimalnumber2e}
\begin{cases}\sum_{i=1}^u(x_{2i-1}x_{2i}^{2^{d}}+x_{2i-1}^{2^{d}}x_{2i})=0,\\
\sum_{i=1}^{u}(\tilde{x}_{2i-1}\tilde{x}_{2i}^{2^{(2j-1)d}}+\tilde{x}_{2i-1}^{2^{(2j-1)d}}\tilde{x}_{2i})=0,\\
j=1,2,\cdots,s-1,\end{cases}\end{equation}
and
\begin{equation}\label{bilinearminimalnumber3e}
\begin{cases}\sum_{i=1}^u(x_{2i-1}x_{2i}^{2^{(\frac{m-e}{2e})d}}+x_{2i-1}^{2^{(\frac{m-e}{2e})d}}x_{2i})=0,\\
\sum_{i=1}^{u}(\tilde{x}_{2i-1}\tilde{x}_{2i}^{2^{(\frac{m+e}{2e}-j)d}}+\tilde{x}_{2i-1}^{2^{(\frac{m+e}{2e}-j)d}}\tilde{x}_{2i})=0,\\
j=1,2,\cdots,s-1,\end{cases}\end{equation}
where $\tilde{x}_i=x_i+x_i^{2^d},~x_i+x_i^{2^{2d}}$, and $x_i+x_i^{2^{-d}}$ respectively.
\end{theorem}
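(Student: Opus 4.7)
All three systems have the common form $\Phi_k(\vec x) = 0$ for a prescribed set of indices $k$, where
\[\Phi_k(\vec x) = \sum_{i=1}^{u}\bigl(x_{2i-1}x_{2i}^{2^{kd}} + x_{2i-1}^{2^{kd}}x_{2i}\bigr).\]
The plan is to reduce each equivalence to a single computational identity. Choose the shift $\delta \in \{1, 2, -1\}$ corresponding to systems (\ref{bilinearminimalnumber1}), (\ref{bilinearminimalnumber2}), and (\ref{bilinearminimalnumber3}) respectively, so that $\tilde{x}_i = x_i + x_i^{2^{\delta d}}$. A direct expansion of $\tilde{x}_{2i-1}\tilde{x}_{2i}^{2^{kd}} + \tilde{x}_{2i-1}^{2^{kd}}\tilde{x}_{2i}$ produces eight monomials in characteristic $2$, which pair off into four blocks to give the clean recursion
\[\Phi_k(\tilde{\vec x}) = \Phi_k(\vec x) + \Phi_{k+\delta}(\vec x) + \Phi_{k-\delta}(\vec x)^{2^{\delta d}} + \Phi_k(\vec x)^{2^{\delta d}}.\]
This identity is the main engine of the proof.

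In case (\ref{bilinearminimalnumber1}) the kept equation is $\Phi_1(\vec x) = 0$ and the transformed equations are $\Phi_j(\tilde{\vec x}) = 0$ for $1 \leq j \leq s-1$. Using $\Phi_0 \equiv 0$ and the identity, one can solve for $\Phi_{j+1}(\vec x)$ from $\Phi_j(\tilde{\vec x})$ given $\Phi_1(\vec x), \dots, \Phi_j(\vec x) = 0$, which yields a straightforward upward induction on $j$ establishing both directions of the equivalence. Case (\ref{bilinearminimalnumber2}) is identical in structure, with $\delta = 2$, jumps of length $2$, and the base case handled by the trivial identity $\Phi_{-1}(\vec x) = \Phi_1(\vec x)^{2^{-d}}$, so that the hypothesis $\Phi_1(\vec x) = 0$ already kills the out-of-range term $\Phi_{-1}(\vec x)^{2^{2d}}$ and launches the recursion on the odd indices.

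Case (\ref{bilinearminimalnumber3}) is the one requiring extra care and is the main obstacle I anticipate. Here $\delta = -1$, the induction moves downward from $k = L - 1$ with $L = (m+e)/(2e)$, and the $j=1$ instance of the recursion involves $\Phi_L(\vec x)$, an index lying beyond the natural range. The resolution is to combine two extra symmetries: the $m/e$-periodicity $\Phi_{k+m/e}(\vec x) = \Phi_k(\vec x)$, valid because $(m,d) = e$ forces $x^{2^{md/e}} = x$ on $\mathbb{F}_{2^m}$, and the Frobenius reflection $\Phi_{-k}(\vec x) = \Phi_k(\vec x)^{2^{-kd}}$. Since $L - m/e = -(L-1)$ by direct computation, these identities together give $\Phi_L(\vec x) = \Phi_{L-1}(\vec x)^{2^{-(L-1)d}}$, so the hypothesis $\Phi_{L-1}(\vec x) = 0$ automatically forces $\Phi_L(\vec x) = 0$, and the downward induction then proceeds exactly as in the first case, producing $\Phi_{L-j}(\vec x) = 0$ for all $j = 1, \dots, s$.
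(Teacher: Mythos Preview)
Your proposal is correct and follows essentially the same approach as the paper. The paper performs two successive invertible row operations on each system (first add a Frobenius power of equation $j-1$ to equation $j$, then add equation $j-1$ to equation $j$), and for system (\ref{bilinearminimalnumber3}) it first adjoins the redundant $j=0$ equation as a Frobenius power of the $j=1$ equation; your master identity $\Phi_k(\tilde{\vec x}) = \Phi_k(\vec x) + \Phi_{k+\delta}(\vec x) + \Phi_{k-\delta}(\vec x)^{2^{\delta d}} + \Phi_k(\vec x)^{2^{\delta d}}$ is exactly the algebraic composite of those two row operations, and your periodicity--reflection argument giving $\Phi_L = \Phi_{L-1}^{2^{-(L-1)d}}$ is the same fact the paper uses to produce the extra $j=0$ equation.
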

\begin{proof} We deal with the system (\ref{bilinearminimalnumber1}) first.
Adding $2^d$-th power of the $(j-1)$-th equation to the $j$-th equation, we arrive at
\[\begin{cases}\sum_{i=1}^u(x_{2i-1}x_{2i}^{2^{d}}+x_{2i-1}^{2^{d}}x_{2i})=0,\\
\sum_{i=1}^u(x_{2i-1}x_{2i}^{2^{jd}}+x_{2i-1}^{2^{jd}}x_{2i}+x_{2i-1}^{2^d}x_{2i}^{2^{jd}}+x_{2i-1}^{2^{jd}}x_{2i}^{2^d})=0,\\
~j=2,3,\cdots,s.\end{cases}\]
Adding the $(j-1)$-th equation to the $j$-th equation in the above system, we arrive at the system
(\ref{bilinearminimalnumber1e}).
\paragraph{}We now deal with the system (\ref{bilinearminimalnumber2}).
Adding $2^{2d}$-th power of the $(j-1)$-th equation to the $j$-th equation, we arrive at
\[\begin{cases}\sum_{i=1}^u(x_{2i-1}x_{2i}^{2^{d}}+x_{2i-1}^{2^{d}}x_{2i})=0,\\
\sum_{i=1}^u(x_{2i-1}x_{2i}^{2^{(2j-1)d}}+x_{2i-1}^{2^{(2j-1)d}}x_{2i}+x_{2i-1}^{2^d}x_{2i}^{2^{(2j-1)d}}+x_{2i-1}^{2^{(2j-1)d}}x_{2i}^{2^d})=0,\\
~j=2,3,\cdots,s.\end{cases}\]
Adding the $(j-1)$-th equation to the $j$-th equation in the above system, we arrive at the system
(\ref{bilinearminimalnumber2e}).
 \paragraph{}Finally we deal with the system (\ref{bilinearminimalnumber3}).
Inserting $2^{\frac{m+e}{2e}}$-th power of the first equation to the system, we arrive at
 the system
\[
\sum_{i=1}^u(x_{2i-1}x_{2i}^{2^{(\frac{m+e}{2e}-j)d}}+x_{2i-1}^{2^{(\frac{m+e}{2e}-j)d}}x_{2i})=0,~j=0,1,2,\cdots,s.\]
Adding the $2^{-d}$-th power of the $(j-1)$-th equation to the $j$-th equation in the above system,
we arrive at
\[
\begin{cases}\sum_{i=1}^u(x_{2i-1}x_{2i}^{2^{(\frac{m-e}{2e})d}}+x_{2i-1}^{2^{(\frac{m-e}{2e})d}}x_{2i})=0,\\
\sum_{i=1}^u(x_{2i-1}x_{2i}^{2^{(\frac{m+e}{2e}-j)d}}+x_{2i-1}^{2^{(\frac{m+e}{2e}-j)d}}x_{2i}
+x_{2i-1}^{2^{-d}}x_{2i}^{2^{(\frac{m+e}{2e}-j)d}}+x_{2i-1}^{2^{(\frac{m+e}{2e}-j)d}}x_{2i}^{2^{-d}})=0,\\
j=1,2,\cdots,s.\end{cases}\]
Adding the $(j-1)$-th equation to the $j$-th equation in the above system, we arrive at the system (\ref{bilinearminimalnumber3e}).
Theorem \ref{elimination} is proved.
\end{proof}
We now  prove Theorem \ref{lineardependence}.
If $u=1$, then $V_{s,u}=V_{u,u}$ trivially. Now  assume that $u\geq 2$.
Suppose that $(x_{1}, x_2, \cdots, x_{2u})$ belongs to $V_{u,u}$. We are going to show that $(x_{1}, x_2, \cdots, x_{2u})$ belongs to $V_{s,u}$. By induction, we may assume that $x_{2u}\neq0$. Then we may further assume that $x_{2u}=1$. By Theorem \ref{elimination}, $(\tilde{x}_{1}, \tilde{x}_2, \cdots, \tilde{x}_{2u})\in V_{u-1,u}$.
As $\tilde{x}_{2u}=0$, we see that $(\tilde{x}_{1}, \tilde{x}_2, \cdots, \tilde{x}_{2u-2})\in V_{u-1,u-1}$. By induction, $(\tilde{x}_{1}, \tilde{x}_2, \cdots, \tilde{x}_{2u-2})\in V_{s-1,u-1}$. As $\tilde{x}_{2u}=0$, we see that $(\tilde{x}_{1}, \tilde{x}_2, \cdots, \tilde{x}_{2u})\in V_{s-1,u}$. By Theorem \ref{elimination}, $(x_{1}, x_2, \cdots, x_{2u})$ belongs to $V_{s,u}$. Theorem \ref{lineardependence} is proved.
\section{\small{THE NUMBER OF BALANCED SEQUENCES}}
In this section we prove Theorem \ref{balanced}. We have
\[\begin{split}&\#\{c\in C\mid
~{\rm DC}(c)=-1\}\\
=&2^{mk}-1-\sum_{j=0}^{\frac{m-e}{2e}}2^{m-e-2ej}
\sum_{u=j}^{k-2}(-1)^{u-j}4^{e\binom{u-j}{2}}\binom{\frac{m-e}{2e}}{u}_{4^e}\binom{u}{j}_{4^e}(2^{m(k-1-u)}-1)\\
=&2^{mk}-1-2^{m-e}\sum_{u=0}^{k-2}4^{-eu}\binom{\frac{m-e}{2e}}{u}_{4^e}(2^{m(k-1-u)}-1)
\sum_{j=0}^{u}(-1)^{j}4^{ej}4^{e\binom{j}{2}}\binom{u}{j}_{4^e}\\
=&2^{mk}-1-2^{m-e}\sum_{u=0}^{k-2}4^{-eu}\binom{\frac{m-e}{2e}}{u}_{4^e}(2^{m(k-1-u)}-1)
\prod_{j=1}^{u}(1-4^{ej})\\
=&2^{mk}-1-\sum_{u=0}^{k-2}(-1)^u2^{-e(u+1)^2}(2^{m(k-u)}-2^m)\prod_{j=0}^{u-1}(2^m-2^{e(2j+1)})\\
&\approx2^{mk}(1-\sum_{u=0}^{k-2}(-1)^u2^{-e(u+1)^2}).\end{split}\]
Theorem \ref{balanced} is proved.
\paragraph{}
{\bf Acknowledgement.} The author thanks Kai-Uwe Schmidt for telling him the background of this subject.

\end{document}